\documentclass{article}

\usepackage{times}  
\usepackage{helvet}  
\usepackage{courier}  
\usepackage[hyphens]{url}  
\usepackage{graphicx} 
\usepackage{caption} 
\usepackage[letterpaper, margin=1in]{geometry}

\usepackage[linesnumbered,ruled,vlined]{algorithm2e}
\usepackage{natbib}
\SetKwInput{KwInput}{Input}                
\SetKwInput{KwOutput}{Output} 
\SetKwInput{KwPre}{Precondition}
\SetKwInput{KwInv}{Invariant}
\usepackage{hyperref}
\usepackage{amsmath,amsfonts,amssymb,bbm,amsthm,bm,xspace}
\usepackage{color}
\usepackage{thm-restate}
\usepackage{authblk}

\SetKwRepeat{Do}{do}{while}
%

\newcommand{\USW}{\mathcal{USW}}
\theoremstyle{definition}
\newtheorem{definition}{Definition}[section]
\theoremstyle{plain}

\newtheorem{theorem}{Theorem}[section]
\newtheorem{lemma}[theorem]{Lemma}
\newtheorem{proposition}[theorem]{Proposition}

\title{EFX Allocations Exist for Binary Valuations\footnote{This Paper has been accepted by International Joint Conference on Theoretical Computer Science – Frontier of Algorithmic Wisdom (IJTCS-FAW 2023)}}
\author[1]{Xiaolin Bu}
\author[2]{Jiaxin Song}
\author[3]{Ziqi Yu}
\affil[1]{Shanghai Jiao Tong University, lin\underline{ }bu@sjtu.edu.cn}
\affil[2]{Shanghai Jiao Tong University, sjtu\underline{ }xiaosong@sjtu.edu.cn}
\affil[3]{Shanghai Jiao Tong University, yzq.lll@sjtu.edu.cn}
\date{}

\begin{document}

\maketitle

\begin{abstract}
We study the fair division problem and the existence of allocations satisfying the fairness criterion \emph{envy-freeness up to any item} (EFX).
The existence of EFX allocations is a major open problem in the fair division literature.
We consider \emph{binary valuations} where the marginal gain of the value by receiving an extra item is either $0$ or $1$.
\cite{babaioff2021fair} proved that EFX allocations always exist for binary and \emph{submodular} valuations.
In this paper, by using completely different techniques, we extend this existence result to general binary valuations that are not necessarily submodular, and we present a polynomial time algorithm for computing an EFX allocation.

\end{abstract}

\section{Introduction}
\emph{Fair division} studies how to allocate heterogeneous resources fairly among a set of agents.
It is a classical resource allocation problem that has been widely studied by mathematicians, economists, and computer scientists.
It has a wide applications including school choices~\citep{abdulkadirouglu2005new}, course allocations~\citep{budish2012multi}, paper review assignments~\citep{lian2018conference}, allocating computational resources~\citep{ghodsi2011dominant}, etc.
Traditional fair division literature considers resources that are \emph{infinitely divisible}.
The fair division problem for infinitely divisible resources is also called \emph{the cake-cutting problem}, which dated back to~\cite{Steinhaus48,Steinhaus49} and has been extensively studied thereafter~\citep{aumann2015efficiency,aumann2012computing,brams2012maxsum,caragiannis2012efficiency,cohler2011optimal,bei2012optimal,bei2017cake,tao2022existence,BU2023103904}.
Among those fairness notions, \emph{envy-freeness} (EF) is most commonly considered, which states that each agent values her own allocated share at least as much as the allocation of any other agent.
In other words, each agent does not envy any other agent in the allocation.

Recent research in fair division has been focusing more on allocating \emph{indivisible items}.
It is clear that absolute fairness such as envy-freeness may not be achievable for indivisible items.
For example, we may have fewer items than agents.
\citet{CKMP+19} proposed a notion that relaxes envy-freeness, called \emph{envy-freeness up to any item} (EFX), which requires that, for any pair of agents $i$ and $j$, $i$ does not envy $j$ after the removal of \emph{any} item from $j$'s bundle.
Despite a significant amount of effort (e.g.,~\citep{plaut2020almost,chaudhury2020efx,chaudhury2021little,caragiannis2019envy,babaioff2021fair,berger2022almost,akrami2022efx,feldman2023optimal}), the existence of EFX allocations is still one of the most fundamental open problems in the fair division literature.
The existence of an EFX allocation is only known for two agents~\citep{plaut2020almost} or three agents with more special valuations~\citep{chaudhury2020efx,akrami2022efx}.

For a general number of agents, the existence of EFX allocations is only known for \emph{binary} and \emph{submodular} valuations, also known as \emph{matroid-rank} valuations~\citep{babaioff2021fair}.
Under this setting, \cite{babaioff2021fair}~designed a mechanism that outputs an EFX allocation and maximizes the \emph{Nash social welfare} at the same time, where the Nash social welfare of an allocation is the product of all the agents' values.
Unfortunately, Babaioff et al.'s techniques cannot be extended to the setting with binary valuations that are not necessarily submodular, as it is possible that all Nash social welfare maximizing allocations can fail to be EFX.
For example, consider a fair division instance with $n$ agents and $m> n$ items, and the valuations of the $n$ agents are defined as follows:
\begin{itemize}
    \item for agent $1$, she has value $1$ if she receives at least $m-n+1$ items, and she has value $0$ otherwise;
    \item for each of the remaining $n-1$ agents, the value equals to the number of the items received.
\end{itemize}
It is easy to check that the valuations are binary in the instance above.
However, the allocation that maximizes the Nash social welfare must allocate $m-n+1$ items to agent $1$ and allocate one item to each of the remaining $n-1$ agents, as this is the only way to make the Nash social welfare nonzero.
It is easy to check that such an allocation cannot be EFX if $m$ is significantly larger than $n$.

As a result, studying the existence of EFX allocations for general binary valuations requires different techniques from~\cite{babaioff2021fair}.

\subsection{Our Results}
We prove that EFX allocations always exist for general binary valuations which may not be submodular.
In addition, we provide a polynomial-time algorithm to compute such an allocation.
Our technique is based on the envy-graph procedure by~\cite{chaudhury2021little}.
\cite{chaudhury2021little} proposed a pseudo-polynomial-time algorithm that computes a \emph{partial} EFX allocation such that the number of the unallocated items is at most $n-1$ and no one envies the unallocated bundle.
We show that, for binary valuations, we can always find a complete EFX allocation, and it can be done in a polynomial time.
In particular, binary valuations enable some extra update steps which can further allocate the remaining unallocated items while guaranteeing EFX property.

\subsection{Related Work}
The existence of EFX allocations in general is a fundamental open problem in fair division.
Many partial progresses have been made in the previous literature.
\cite{plaut2020almost} showed that EFX allocations exist if agents' valuations are identical. 
By a simple ``I-cut-you-choose'' protocol, this result implies EFX allocations always exist for two agents.
\cite{chaudhury2020efx} showed that EFX allocations exist for three agents if the valuations are additive (meaning that the bundle's value equals to the sum of the values to the individual items).
The existence of EFX allocations for three agents is extended to slightly more general valuation functions by~\cite{feldman2023optimal}.

For binary valuations, \cite{babaioff2021fair} showed that EFX allocations always exist if the valuation functions are submodular. More details about this work have already been discussed in the introduction section, and this is the work most relevant to our paper.

Since the existence of EFX allocations is a challenging open problem, many papers study partial EFX allocations.
\cite{chaudhury2021little} proposed a pseudo-polynomial-time algorithm that computes a partial EFX allocation such that the number of the unallocated items is at most $n-1$ and no one envies the unallocated bundle.
For additive valuations, \cite{caragiannis2019envy} showed that it is possible to obtain a partial EFX allocation that achieves $0.5$-approximation to the optimal Nash social welfare;
\cite{berger2022almost} showed that, for four agents, we can have a partial EFX allocation with at most one unallocated item.

\citet{budish2011combinatorial} proposed \emph{envy-freeness up to one item} (EF1), which is weaker than EFX. 
It requires that, for any pair of agents $i$ and $j$, \emph{there exists} an item from $j$'s allocated bundle whose hypothetical removal would keep agent $i$ from envying agent $j$.
It is well-known that EF1 allocations always exist and can be computed in polynomial time \citep{Lipton04onapproximately,budish2011combinatorial}.

Other than fairness, previous work has also extensively studied several efficiency notions including social welfare, Nash social welfare, Pareto-optimality, and their compatibility with fairness~\citep{BKV18,CKMP+19,garg2021fair,aziz2020computing,barman2020optimal,bu2022complexity,bei2021price,caragiannis2019envy,chaudhury2020efx}.
We will not further elaborate on it here.

\section{Preliminaries}
Let $N$ be the set of $n$ agents and $M$ be the set of $m$ indivisible items.
Each agent $i$ has a valuation function $v_i: 2^M\rightarrow \mathbb{R}_{\ge 0}$ to specify her utility to a bundle, and we abuse the notation to denote $v_i(\{g\})$ by $v_i(g)$.
Each valuation function $v_i$ is \emph{normalized} and \emph{monotone}:
\begin{itemize}
    \item Normalized: $v_i(\emptyset)=0$;
    \item Monotone: $v_i(S)\geq v_i(T)$ whenever $T\subseteq S$.
\end{itemize}
We say a valuation function $v_i$ is \emph{binary} if $v_i(\emptyset) = 0$ and $v_i(S\cup \{g\})-v_i(S)\in \{0, 1\}$ for any $S\subseteq M$ and $g\in M\setminus S$.
In this paper, we will consider exclusively binary valuation functions.
Note that we do not require the valuation function to be additive.

An allocation $\mathcal{A}=(A_1, A_2, \dots, A_n)$ is a partition of $M$, where $A_i$ is allocated to agent $i$. A partial allocation $\mathcal{A}=(A_1,A_2,\ldots,A_n,P)$ is also a partition of $M$, where $P$ is the set of unallocated items. Notice a partial allocation is an allocation if and only if $P=\emptyset$. 
It is said to be \emph{envy-free} if $v_i(A_i)\ge v_i(A_j)$ for any $i$ and $j$.
However, an envy-free allocation may not exist when allocating indivisible items, for example, when $m < n$.
In this paper, we consider a relaxation of envy-freeness, called \textit{envy-freeness up to any item} (EFX) \citep{CKMP+19}.

\begin{definition}[EFX]
An allocation $\mathcal{A} = (A_1, \dots, A_n)$ satisfies envy-freeness up to any item (EFX) if $v_i(A_i)\ge v_i(A_j\setminus \{g\})$ for any $i, j$ and any item $g\in A_j$.
\end{definition}

\begin{definition}[Strong Envy]
    Given a partial allocation $(A_1,\ldots,A_n,P)$, we say that $i$ \textbf{strongly envy} $j$ if $v_i(A_i) <v_i(A_j\setminus\{g\})$ for some $g\in A_j$. Notice that we can extend this definition to a complete allocation by setting $P=\emptyset$.
\end{definition}

It is clear that an allocation is EFX if and only if there is no strong envy between every pair of agents.

Given an allocation $\mathcal{A}$, its \emph{utilitarian social welfare}, or simply \emph{social welfare}, is defined by 
$$\USW(\mathcal{A})=\sum_{i=1}^n v_i(A_i).$$

\subsection{Envy-Graph}
\begin{definition}[Envy-Graph]
    Given a partial allocation $\mathcal{A}=\{A_1,A_2,\ldots,A_n,P\}$, the envy graph $G=(V,E)$ is defined as follows. Each vertex in $V$ is considered as an agent. For agents $i,j\in N$, $(i,j)\in E$ if and only if $i$ envies $j$, i.e. $v_i(A_i)<v_i(A_j)$.
\end{definition}
\begin{definition}[Cycle-Elimination]\label{def:cycleelimination}
    For a cycle $u_0\to u_1\to\cdots u_{k-1}\to u_0$, \textbf{the cycle-elimination procedure} is performed as follows. 
    Each agent $u_i$ receives the bundle from $u_{i+1}$ for $i\in\{0,1,\ldots,k-1\}$(indices are modulo $k$). This process terminates when there is no cycle in the envy-graph. 
\end{definition}
It can be verified that a cycle-elimination step will not break the EFX property, since all bundles remain unchanged, what we have done is just exchanging these bundles. 
Also, the social welfare will increase since every agent in the cycle gets the bundle which is of higher value than her previous one.
\begin{definition}[Source agent]
    An agent is called a \textbf{source agent} if the in-degree of her corresponding vertex in the envy-graph is $0$. In other words, an agent is a source agent if no one envies her bundle.
\end{definition} 

Note that in the following part, we add another type of edge: maximal envy edges into the envy-graph.
However, when referring to the source agent, we only consider the induced subgraph of the envy-graph without the maximal envy edges.

\subsection{Binary Valuations and Pre-Envy}
Recall that we focus on binary valuations in this paper.
When valuation functions are binary, there are some special properties compared with general valuation functions. In this case, the marginal gain of a single item is $0$ or $1$. 

\begin{proposition}
    In an EFX allocation, if $i$ envies $j$, then $v_i(A_j)-v_i(A_i)=1$.
\end{proposition}
\begin{proof}
    Since $i$ envies $j$, $v_i(A_j)-v_i(A_i)>0$. Also, suppose $v_i(A_j)-v_i(A_i)\geq 2$. Since we focus on the binary valuation profile, for any $g\in A_j$
    \[v_i(A_i)+2\leq v_i(A_j)\leq v_i(A_j\setminus \{g\})+1.\]
    However, by the definition of EFX, we have $v_i(A_i)\geq v_i(A_j\setminus\{g\})$ for any $g\in A_j$, which leads to a contradiction.
\end{proof}
\begin{proposition}\label{prop:1}
    Suppose $j$ does not envy $i$'s bundle, i.e., $v_j(A_j)\geq v_j(A_i)$. If $j$ envies $i$'s bundle after adding an item $g$ to it, then $v_j(A_j)= v_j(A_i)$
\end{proposition}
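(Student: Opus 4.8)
The plan is to exploit the binary property to pin down how much adding a single item can help. First I would observe that, because every valuation in this paper is binary, all values $v_j(\cdot)$ are non-negative integers: starting from $v_j(\emptyset)=0$ and adding items one at a time increases the value by $0$ or $1$ each step. This integrality is the only slightly non-routine ingredient, and everything else is an inequality chase.

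Next I would apply the definition of binary valuation to the set $A_i$ and the item $g$, giving $v_j(A_i\cup\{g\})-v_j(A_i)\in\{0,1\}$, hence $v_j(A_i\cup\{g\})\le v_j(A_i)+1$. Then I would use the hypothesis that $j$ envies $A_i\cup\{g\}$, i.e. $v_j(A_j)<v_j(A_i\cup\{g\})$, to chain these together: $v_j(A_j)<v_j(A_i)+1$. Since both sides are integers, this forces $v_j(A_j)\le v_j(A_i)$.

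Finally I would combine this with the standing assumption $v_j(A_j)\ge v_j(A_i)$ to conclude $v_j(A_j)=v_j(A_i)$, which is exactly the claim. I do not expect any genuine obstacle here; the only thing worth being careful about is invoking integrality of binary valuations rather than treating the values as arbitrary reals, since the step from $v_j(A_j)<v_j(A_i)+1$ to $v_j(A_j)\le v_j(A_i)$ relies on it.
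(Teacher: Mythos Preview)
Your proof is correct and follows essentially the same argument as the paper: both use the binary marginal bound $v_j(A_i\cup\{g\})\le v_j(A_i)+1$ together with integrality of the values to squeeze out the equality. The paper phrases it as a short proof by contradiction (assuming $v_j(A_j)\ge v_j(A_i)+1$ and deriving $v_j(A_i\cup\{g\})\le v_j(A_j)$), while you argue directly, but the content is the same.
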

\begin{proof}
    If not so, then $v_j(A_j)\geq v_j(A_i)+1$. Then $v_j(A_i\cup\{g\})\leq V_j(A_i)+1\leq V_j(A_j)$, which contradicts to the fact that $j$ will envy $i$'s bundle after adding an item to it.
\end{proof}
By Proposition \ref{prop:1}, suppose a new edge $(j,i)$ appears in the envy graph after adding an item to $i$'s bundle. Then, if we do not add this item, we have $v_j(A_j)=v_j(A_i)$. To better illustrate this observation, we introduce the following notion and consider this new relationship in the envy graph. 
\begin{definition}[Pre-Envy]
    Suppose $i,j\in N$ are two agents. We say $j$ \textbf{pre-envies} $i$, if $v_j(A_j)=v_j(A_i)$. In the envy graph, $j$ pre-envies $i$ will be represented by $j\dashrightarrow i$.
\end{definition}
In the binary valuation profile, we can jointly consider envy and pre-envy. Let $G=(V,E,E')$, where $V$ is the set of all agents, $E$ is the set of all envy relationships and $E'$ are the pre-envy ones. If edges in $E\cup E'$ form a cycle, we can also use the cycle-elimination procedure.

As a remark, for a cycle with pre-envy edges only, it may still exist after applying the cycle-elimination procedure.
For the cycle with pre-envy edges, we do not always eliminate it.
Our algorithm only eliminates this type of cycles under some particular scenarios.
We will provide more details in the next section.

As another remark, our notion of \emph{pre-envy edge} is the same as the \emph{equality edge} in the paper~\citep{bei2021fair}.
We choose a different word in this paper as we will sometimes use ``pre-envy'' as a verb.

\section{Existence of EFX Allocations}
In this section, we prove our main result.

\begin{theorem}\label{thm:1}
For any binary valuation profile $(v_1,\ldots,v_n)$, there exists an EFX allocation, and it can be computed by a polynomial-time algorithm.
\end{theorem}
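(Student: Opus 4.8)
The plan is to extend the partial-EFX envy-graph procedure of \cite{chaudhury2021little} by exploiting the integrality of binary valuations. I would maintain a partial allocation $(A_1,\dots,A_n,P)$ that is EFX and additionally satisfies the invariant that \emph{no agent envies the pool} $P$, so that the moment $P$ becomes empty we have a full EFX allocation. A natural starting point is the partial allocation produced by~\cite{chaudhury2021little}, which is EFX, has no pool-envy, and has $|P|\le n-1$ (for binary valuations the overall potential below will absorb the running time of this preprocessing); the task of the main loop is then to strictly shrink $P$ while never destroying EFX or the pool-envy invariant.

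Each iteration would proceed as follows. First, repeatedly apply cycle-elimination in the envy graph (and, in the designated special cases, to cycles that also use pre-envy edges); by the remarks in the preliminaries this preserves EFX and does not decrease $\USW(\mathcal{A})$, and it guarantees a source agent $s$. Pick $g\in P$ and consider the tentative bundle $A_s\cup\{g\}$. If no agent strongly envies $A_s\cup\{g\}$, then handing $A_s\cup\{g\}$ to $s$ keeps EFX: pairs not involving $s$ are untouched, a pair $(i,s)$ with $i\ne s$ is fine precisely because $i$ does not strongly envy $A_s\cup\{g\}$, and $s$'s own value only grows; the pool-envy invariant also survives since $P$ only shrank. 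This is an \textbf{easy step}: it decreases $|P|$ by one with $\USW$ unchanged.

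The heart of the argument, and the main obstacle, is the case in which some agent strongly envies $A_s\cup\{g\}$. Binary valuations force a rigid structure here: since $s$ is a source, any such agent $i$ satisfies $v_i(A_i)\ge v_i(A_s)$, and combining this with $v_i(A_i)<v_i((A_s\cup\{g\})\setminus\{h\})\le v_i(A_s)+1$ and integrality gives $v_i(A_i)=v_i(A_s)$ — that is, $i$ \emph{pre-envies} $s$, and $g$ contributes exactly one extra unit over $A_s$ with some item $h$ that is useless to $i$ removed (this is where Proposition~\ref{prop:1} and the binary marginal-gain property enter, and what the ``maximal envy edge'' bookkeeping records: which agent is the designated recipient of a reallocated subset of $A_s\cup\{g\}$). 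I would then choose an inclusion-minimal subset $B\subseteq A_s\cup\{g\}$ that some agent $j$ strictly prefers to her own bundle; minimality implies that no agent strongly envies $B$, so $B$ is safe to give to $j$, whose value strictly increases by at least one unit. The remainder $(A_s\cup\{g\})\setminus B$ together with $A_j$ is redistributed to $s$ and/or back to $P$, and one verifies — using the pre-envy equalities, and, when $s$ would otherwise lose value, an accompanying pre-envy cycle-elimination that compensates it — that EFX and the pool-envy invariant are re-established; this is a \textbf{hard step}, with $\USW$ strictly increasing.

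For termination in polynomial time, use the lexicographic potential $(\USW(\mathcal{A}),\,-|P|)$ (augmenting with the number of source agents if a hard step's redistribution requires it): every hard step strictly increases $\USW$, and between consecutive hard steps only easy steps occur, each decreasing $|P|$. Since $\USW(\mathcal{A})\le\sum_i v_i(M)\le nm$ and $|P|\le m$, the number of iterations is $O(nm^2)$, and each iteration (cycle detection and elimination, scanning for a minimal enviable subset) runs in polynomial time. The points I expect to require the most care are: (i) proving that in the hard case a beneficial reallocation always exists and never creates strong envy for any pair; (ii) showing the pool-envy invariant is preserved after goods are returned to $P$; and (iii) ruling out infinite loops among configurations of equal $\USW$ — which is exactly why pure pre-envy cycles are eliminated only selectively rather than always.
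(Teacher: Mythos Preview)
Your plan is essentially the paper's: your easy and hard steps are exactly the rules $U_0$ and $U_1$, your inclusion-minimal envied subset $B\subseteq A_s\cup\{g\}$ is precisely the paper's \emph{safe bundle} (minimality immediately rules out strong envy toward $B$, which is the content of Lemma~\ref{lem:1}), and your lexicographic potential $(\USW,-|P|)$ yields the same $O(nm^2)$ bound as the paper's termination argument.

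Two comments. First, the pool-envy invariant is a red herring: it is not needed for correctness (a partial EFX allocation with $P=\emptyset$ is already a complete EFX allocation) nor for termination (the $\USW$ potential alone does the job), and after a hard step the items $(A_s\cup\{g\})\setminus B$ returned to $P$ can perfectly well be envied by someone. The paper simply does not maintain this invariant, and you should drop it together with the detour through \cite{chaudhury2021little}; starting from the empty allocation works and is cleaner. Second, the real content sits exactly where you flag concern~(i): it is not enough to hand $B$ to some $j$ and ``redistribute $A_j$ and the remainder to $s$ and/or $P$'', because $s$ may then lose value and strongly envy others, and $\USW$ need not rise. The paper resolves this by first finding a \emph{cycle} made of ordinary envy edges and maximal-envy edges $c_i\dashrightarrow_g s_i$ (Lemma~\ref{lem:precycle}): since every source $s_i$ has an incoming maximal-envy edge whenever $U_0$ fails, walking backwards from any source along envy edges to another source, then along that source's incoming maximal-envy edge, and so on, must eventually revisit a source and close a cycle. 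Replacing $A_s$ by the safe bundle $S$ and rotating bundles along this cycle guarantees that $s$ receives a bundle it (pre-)envies, every agent on the cycle weakly improves, and the agent preceding $s$ strictly improves --- which is what forces $\USW$ up and keeps EFX (Lemma~\ref{lem:remainEFX}). Your mention of ``an accompanying pre-envy cycle-elimination'' is pointing at the right mechanism; the missing piece in your outline is this existence-of-cycle argument.
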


\subsection{The Main Algorithm} 
We describe our algorithm in Algorithm~\ref{alg:main}.

\begin{algorithm}[H]\label{alg:main}
  \KwOut{a complete and EFX allocation $(A_1,A_2,\ldots,A_n)$}
  Let $A_i=\emptyset,i\in N$, and $P=M$\;
  \While{$P\neq\emptyset$}{
    Find an arbitrary item $g\in P$\;
    Update the partial allocation by $\mathcal{A} \leftarrow \texttt{Update}(\mathcal{A},g)$ according to Algorithm~\ref{alg:update}\;
    Update the envy graph and perform the cycle-elimination procedure (Definition~\ref{def:cycleelimination})\;
  }
    \Return{$(A_1,\ldots, A_n)$}
  \caption{Computing an EFX allocation with binary marginal gain}
\end{algorithm}

The algorithm starts with a partial allocation $\mathcal{A}=(\emptyset,\ldots,\emptyset, M)$.
In each iteration, it will consider an unallocated item $g$ and invoke the update function \texttt{Update}$(\mathcal{A},g)$.
In particular, we attempt to allocate $g$ to a source agent by applying one of the two update rules $U_0$ and $U_1$.
After that, we update the envy graph and perform the cycle-elimination procedure to guarantee the existence of source agents.

\subsubsection{Rule $U_0$}
The updating Rule $U_0$ considers a special case, where there exists a source agent $i\in N$, such that no agent will strongly envy the bundle $A_i\cup\{g\}$.
In this case, $U_0$ just allocates $g$ to this source agent $i$, which does not destroy the property of EFX.
\subsubsection{Rule $U_1$} 
If $U_0$ fails, then for every source agent $i$, adding $g$ to $i$'s bundle will cause at least one strong envy. Suppose $\{s_1,\ldots,s_k\}$ is the set of source agents. By assumption, after adding $g$ to $s_1$'s bundle, some agents will strongly envy $s_1$.

Before describing the update rule $U_1$, we first define some notions which are used later.

\begin{definition}[Safe Bundle and Maximal Envious Agent]\label{def:safe}
    Suppose a partial allocation $\mathcal{A}=(A_1,A_2,\ldots,A_n,P)$, an agent $i$ and an item $g\in P$ satisfy that
    \begin{itemize}
        \item[(a)] $\mathcal{A}$ is a partial EFX allocation, and
        \item[(b)] adding $g$ to $A_i$ causes someone strongly envy $i$.
    \end{itemize}
    We say that $S\subseteq A_i\cup \{g\}$ is a \textbf{safe bundle} with respect to $\mathcal{A},i$ and $g$, if
    \begin{enumerate}
        \item There exists $j\in N$ such that $j$ envies $S$. 
        \item For each agent $j$ that envies the bundle $A_i\cup \{g\}$  (i.e., $v_{j}(A_i\cup \{g\}) > v_j(A_j)$), $j$ does not strongly envy $S$ (i.e., 
        for any item $s\in S$, $v_j(S\setminus \{s\}) \le v_j(A_j)$).
    \end{enumerate}
    \textbf{The maximal envious agent} is one of the agents who envies $S$. 
\end{definition}

Intuitively, a safe bundle $S$ is a \emph{minimal} subset of $A\cup\{g\}$ such that someone still envies $S$.
The minimal property guarantees that no one will strongly envy $S$.

We also remark that, according to our definition, the maximal envious agent of $\mathcal{A},i,g$ may be agent $i$ herself.

\begin{lemma}\label{lem:1}
    For every triple $(\mathcal{A},i,g)$ satisfying (a) and (b) in Definition~\ref{def:safe}, a safe bundle $S\subseteq A_i\cup\{g\}$ and the corresponding maximal envious agent $a$ always exists, and they can be found in a polynomial time.
\end{lemma}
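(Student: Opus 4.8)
The plan is to take $S$ to be an inclusion-minimal subset of $A_i\cup\{g\}$ that is still envied by some agent, and to show that minimality is exactly what rules out strong envy. Concretely, I would start from $S_0:=A_i\cup\{g\}$ and greedily shrink it.

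First, check the starting point. By hypothesis (b) there is an agent $j^\ast$ who strongly envies $A_i\cup\{g\}$, i.e. $v_{j^\ast}\big((A_i\cup\{g\})\setminus\{s\}\big)>v_{j^\ast}(A_{j^\ast})$ for some item $s$. Monotonicity then yields $v_{j^\ast}(A_i\cup\{g\})\ge v_{j^\ast}\big((A_i\cup\{g\})\setminus\{s\}\big)>v_{j^\ast}(A_{j^\ast})$, so $j^\ast$ envies $A_i\cup\{g\}$; hence $S_0$ satisfies requirement~1 of Definition~\ref{def:safe}. Next, run the shrinking loop: while there exist an item $s\in S$ and an agent $j\in N$ with $v_j(S\setminus\{s\})>v_j(A_j)$, replace $S$ by $S\setminus\{s\}$. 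Each replacement keeps requirement~1 alive, since the witnessing agent $j$ now envies the new, smaller $S$; thus the invariant ``some agent envies $S$'' is maintained throughout, and in particular the final $S$ is nonempty (the empty set is never envied, as $v_j(\emptyset)=0\le v_j(A_j)$) and still satisfies requirement~1. The loop runs for at most $|A_i|+1\le m$ iterations because $|S|$ strictly decreases, and each iteration only evaluates $v_j(S\setminus\{s\})$ and $v_j(A_j)$ over the at most $mn$ pairs $(s,j)$, so the whole computation uses a polynomial number of value queries.

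Finally, read off requirement~2 from the termination condition: when the loop halts, for every $s\in S$ and every $j\in N$ we have $v_j(S\setminus\{s\})\le v_j(A_j)$ (the final $S$ is in fact inclusion-minimal among envied subsets of $A_i\cup\{g\}$, by monotonicity). In particular this holds for every agent $j$ that envies $A_i\cup\{g\}$, which is precisely the statement that no such $j$ strongly envies $S$. Hence $S$ is a safe bundle, and a maximal envious agent $a$ is obtained by scanning all agents and returning any $j$ with $v_j(S)>v_j(A_j)$ — one exists by requirement~1. I do not expect a genuine obstacle here; the only points requiring care are maintaining the ``someone envies $S$'' invariant across the shrinking step and observing that requirement~2 is exactly the negation of the loop-continuation test. (Note that binary valuations play no role in this lemma: the argument works for any normalized monotone valuations.)
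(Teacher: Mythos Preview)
Your proof is correct and follows essentially the same greedy-shrinking strategy as the paper: start from $A_i\cup\{g\}$ and repeatedly delete items while maintaining the invariant that some agent still envies the current set, terminating precisely when no further deletion preserves envy (equivalently, when no one strongly envies $S$). The only organizational difference is that the paper drives the shrinking agent-by-agent---iterating over the agents who strongly envy $A_i\cup\{g\}$ and letting each one trim $S$ until she merely envies it---whereas you test all agent--item pairs globally at each step; both arrive at an inclusion-minimal envied subset and take any remaining envying agent as the maximal envious agent.
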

We defer the proof of Lemma~\ref{lem:1} to Section~\ref{sec:proof}.

By Lemma~\ref{lem:1}, we can always find $S\subseteq A_{s_1}\cup\{g\}$ and the maximal envious agent $c_1\in N$, such that 
\begin{itemize}
    \item Agent $c_1$ envies $S$, i.e., $v_{c_1}(S)>v_{c_1}(A_{c_1})$.
    \item No agent strongly envies $S$.
\end{itemize}

Since agent $c_1$ does not envy $s_1$ before $g$ is added to $A_{s_1}$ (as $s_1$ is a source agent) and envies $s_1$ after the addition of $g$, according to Proposition~\ref{prop:1}, agent $c_1$ pre-envies $s_1$ in the allocation $\mathcal{A}$ (where $g$ has not been added yet).
We will use $c_1\dashrightarrow_gs_1$ to denote this special pre-envy edge where $c_1$ is a maximal envious agent for $\mathcal{A},s_1,g$.
We say that $\dashrightarrow_g$ is a maximal envy edge.

Since we have assumed adding $g$ to each source agent causes strong envy, by Lemma~\ref{lem:1}, for each source agent $s_i\in\{s_1,\ldots,s_k\}$, there exists an agent $c_i$ such that $c_i\dashrightarrow_gs_i$.
It is possible that $s_i\dashrightarrow_g s_i$.
In this case, we add a self-loop to the envy graph.


In the lemma below, we show that, if the pre-condition of $U_0$ fails, then there must exist a cycle that only consists of edges in the original $G$ and the maximal envy edges.

\begin{lemma}\label{lem:precycle}
    For a partial EFX allocation $\mathcal{A}$ and an unallocated item $g$, if $A_i\cup\{g\}$ is strongly envied by someone for every source agent $i$, there must exist a cycle containing at least one source agent that only consists of edges in the original $G$ and the maximal envy edges. Moreover, the cycle can be found in polynomial time. 
\end{lemma}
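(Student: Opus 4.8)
The plan is to form an auxiliary directed graph $D$ in which every vertex has exactly one incoming edge, deduce that $D$ contains a directed cycle by a finiteness argument, and then show that at least one such cycle must pass through a source agent --- because a cycle of $D$ avoiding every source agent would consist entirely of (strict) envy edges, contradicting the invariant that cycle-elimination removes all such cycles.

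To build $D$, recall first that, by hypothesis, $A_i\cup\{g\}$ is strongly envied by someone for every source agent $i$, so Lemma~\ref{lem:1} supplies, for each source agent $s_i$, a maximal envious agent $c_i$ and hence a maximal envy edge $c_i\dashrightarrow_g s_i$ (possibly $c_i=s_i$, a self-loop). On the other hand, every non-source agent has, by definition, at least one incoming envy edge in $E$. I would then let $D$ be the directed graph on the agent set in which each source agent $s_i$ is given the single incoming edge $c_i\to s_i$, and each non-source agent $v$ is given a single incoming edge picked arbitrarily among its incoming envy edges. Then $D$ has $n$ vertices, $n$ edges, and every vertex has in-degree exactly one.

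Starting at an arbitrary vertex and repeatedly passing to the tail of its unique incoming edge gives a walk in a finite graph, so some vertex recurs; the portion of the walk between two consecutive visits to the first recurring vertex is a directed cycle $C$ of $D$ (and any self-loop $c_i=s_i$ is already such a cycle). Now suppose, for contradiction, that no directed cycle of $D$ contains a source agent. Then every vertex of $C$ is a non-source agent, so by construction of $D$ every edge of $C$ is an envy edge of $E$; i.e., the envy graph contains a directed cycle made up solely of envy edges. But our algorithm applies the cycle-elimination procedure (Definition~\ref{def:cycleelimination}) before invoking this lemma, so the envy graph of $\mathcal{A}$ has no cycle consisting only of envy edges --- a contradiction. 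Hence $D$ has a directed cycle through some source agent, and by construction that cycle uses only edges of the original $G$ together with maximal envy edges.

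For the polynomial-time claim, the envy graph, its set of source agents, and (via Lemma~\ref{lem:1}) the maximal envy edges $c_i\dashrightarrow_g s_i$ are all computable in polynomial time, so $D$ is built in polynomial time; to extract the cycle it then suffices, for each source agent $s_i$ in turn, to follow the unique backward pointers of $D$ for at most $n$ steps and test whether the walk returns to $s_i$, which the argument above guarantees will succeed for at least one $s_i$. The genuinely delicate point is the contradiction step --- recognising that a source-free cycle of $D$ is forced to consist entirely of strict envy edges, which is exactly what the cycle-elimination invariant forbids --- while the remaining steps are routine bookkeeping.
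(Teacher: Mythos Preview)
Your proof is correct and follows essentially the same idea as the paper's: walk backwards along envy edges and maximal envy edges until a repetition occurs, using the acyclicity of the envy graph (guaranteed by the preceding cycle-elimination step) to force the resulting cycle through a source. The only difference is packaging: the paper traces the walk source-to-source (from each $c_i$ it jumps to a source ancestor $s_{i+1}$ in the envy DAG, then takes one maximal envy edge back to $c_{i+1}$, stopping when a source repeats), whereas you build an auxiliary in-degree-one graph $D$ and argue by contradiction that the cycle it contains cannot avoid all sources; both arguments rest on the same invariant and yield the same conclusion.
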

\begin{proof}
    We have shown that, for any $s_i$, there exists $c_i$ such that $c_i\dashrightarrow_g s_i$.
    We start from $s_1$, and we find $c_1$ such that $c_1\dashrightarrow_g s_1$.
    We find the source $s_2$ from which $c_1$ is reachable, and we find $c_2\dashrightarrow_g s_2$.
    We keep doing this: whenever we are at $c_i$, we find the source $s_{i+1}$ from which $c_i$ is reachable; and we find $c_{i+1}$ such that $c_{i+1}\dashrightarrow_g s_{i+1}$.
    Notice that we are traveling backward along a path that consists only of graph edges and maximal envy edges.
    We can keep traveling until we find a source that has been already visited before, in which case we have found a cycle. 
    
    It is also easy to check that the above procedure can be done in polynomial time.
\end{proof}

Next, we find an arbitrary source agent $s$ in the cycle described in Lemma~\ref{lem:precycle}.
We apply Lemma~\ref{lem:1} and find a safe bundle $S\subseteq A_s\cup\{g\}$.
We replace $A_s$ with $S$, and perform an operation that is similar to cycle-elimination: let each agent in the cycle receives the bundle of the next agent.
In particular, the agent preceding $s$ is the one that maximally envies $s$; she will receive $S$ and get a higher value.

The lemma below justifies the correctness of $U_1$.

\begin{lemma}\label{lem:remainEFX}
    After applying the update rule $U_1$, the partial allocation remains EFX.
\end{lemma}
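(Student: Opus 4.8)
The plan is to verify the EFX property for every pair of agents after the update $U_1$, by carefully tracking which bundles changed. Recall what $U_1$ does: it picks a source agent $s$ lying on the cycle from Lemma~\ref{lem:precycle}, replaces $A_s$ by a safe bundle $S \subseteq A_s \cup \{g\}$, and then rotates bundles along the cycle (each agent on the cycle receives the bundle of its successor, with the agent preceding $s$ — the maximal envious agent $c$ for $(\mathcal{A}, s, g)$ — receiving $S$). So after the update: (i) agents not on the cycle keep their old bundles; (ii) each agent on the cycle other than $c$ receives some unchanged old bundle $A_{u}$; (iii) agent $c$ receives the brand-new bundle $S$; and (iv) the items in $(A_s \cup \{g\}) \setminus S$ become unallocated (they go back into $P$). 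The key point is that the only genuinely new bundle in the system is $S$, and every other held bundle is one that already existed in the EFX allocation $\mathcal{A}$.

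First I would handle envy \emph{toward} bundles other than $S$. For any agent $a$ (on or off the cycle) and any held bundle $B \neq S$ after the update, $B = A_b$ for some $b$ in the original allocation $\mathcal{A}$, and $a$'s valuation $v_a$ is unchanged; since $\mathcal{A}$ was EFX we had $v_a(A_a^{\mathrm{old}}) \ge v_a(A_b \setminus \{h\})$ for all $h \in A_b$. Because along the cycle every agent receives a bundle of value at least as large as its old one (this is exactly why the rotation is a valid cycle-elimination-type step — each cycle edge is an envy or pre-envy edge), we have $v_a(A_a^{\mathrm{new}}) \ge v_a(A_a^{\mathrm{old}}) \ge v_a(A_b \setminus \{h\})$, so $a$ does not strongly envy $B$. (For $a = c$, I would use $v_c(S) > v_c(A_c^{\mathrm{old}})$, so $c$'s value also only went up.) This disposes of all pairs except those where the envied bundle is $S$ itself.

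The remaining, and main, case is strong envy toward $S$. Here I invoke property~2 of the safe bundle (Definition~\ref{def:safe}): for every agent $j$ that envies $A_s \cup \{g\}$ in $\mathcal{A}$, $j$ does not strongly envy $S$ — i.e.\ $v_j(S \setminus \{s'\}) \le v_j(A_j^{\mathrm{old}})$ for all $s' \in S$. I need this to hold for every agent's \emph{new} bundle, so I split on whether an agent $a$ envied $A_s \cup \{g\}$ in $\mathcal{A}$. If $a$ did envy $A_s \cup \{g\}$, then since $S \subseteq A_s \cup \{g\}$ and $v_a(A_a^{\mathrm{new}}) \ge v_a(A_a^{\mathrm{old}})$, safe-bundle property~2 gives $v_a(A_a^{\mathrm{new}}) \ge v_a(A_a^{\mathrm{old}}) \ge v_a(S \setminus \{s'\})$ for all $s' \in S$, so no strong envy. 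If $a$ did \emph{not} envy $A_s \cup \{g\}$, i.e.\ $v_a(A_a^{\mathrm{old}}) \ge v_a(A_s \cup \{g\}) \ge v_a(S) \ge v_a(S \setminus \{s'\})$ by monotonicity, and again $v_a(A_a^{\mathrm{new}}) \ge v_a(A_a^{\mathrm{old}})$, so $a$ does not even envy $S$, let alone strongly. The one subtlety I would be careful about is the agent $c$ who \emph{receives} $S$: $c$ could be $s$ itself or another agent, but since nobody strongly envies their own bundle is not automatic — however $v_c(S) > v_c(A_c^{\mathrm{old}})$ means $c$ strictly prefers $S$, hence $c$ does not envy $S = A_c^{\mathrm{new}}$, and $c$ cannot strongly envy $S$ either. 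Combining the two cases, no agent strongly envies $S$, and together with the first part this shows the partial allocation remains EFX.

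The main obstacle is purely bookkeeping: making sure the inequality $v_a(A_a^{\mathrm{new}}) \ge v_a(A_a^{\mathrm{old}})$ is legitimately available for \emph{every} agent (which requires that the rotation really is along a cycle of envy/pre-envy edges, so every agent on it weakly improves, and agents off it are untouched), and being careful that ``$a$ envies $A_s\cup\{g\}$'' is the exact hypothesis under which safe-bundle property~2 applies — a case split on this membership, as above, closes the gap cleanly. I do not expect any genuinely hard step; the content is entirely in the design of the safe bundle, which was already established in Lemma~\ref{lem:1}.
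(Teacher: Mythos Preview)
Your proof is correct and follows the same approach as the paper's: the only new bundle after $U_1$ is the safe bundle $S$, every agent's value weakly increases along the rotation (strictly for the maximal envious agent $c$ receiving $S$), so EFX toward old bundles is inherited, while Definition~\ref{def:safe} handles strong envy toward $S$. If anything you are more careful than the paper, which asserts ``Definition~\ref{def:safe} implies that no one strongly envies $S$'' without spelling out the case split on whether an agent envied $A_s\cup\{g\}$; your explicit treatment of the non-envying case via monotonicity is exactly the missing line.
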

\begin{proof}
    After the update, the bundle $A_s$ is replaced by $S$.
    Definition~\ref{def:safe} implies that no one strongly envies $S$ and the agent in the cycle preceding $s$ receives a higher value by getting $S$.
    Each remaining agent in the cycle receives a weakly higher value.
    Thus, the resultant allocation is EFX after the update for the same reason that the cycle-elimination procedure does not destroy the EFX property.
\end{proof}

\begin{algorithm}[H]\label{alg:update}
  \caption{The update rules for allocation $\mathcal{A}$ and item $g\in P$}
  \SetKwProg{Fn}{Function}{}{end}
  \Fn{\emph{\texttt{Update}}(allocation $\mathcal{A}=(A_1,\ldots,A_n,P)$, item $g\in P$)}{
    \If{there exists a source agent $i$ such that adding $g$ to $A_i$ does not break the EFX property} {
        Perform $U_0(\mathcal{A},g)$\;
    }\Else{
        Perform $U_1(\mathcal{A},g)$\;
    }
  }
  \Fn{$U_0$($\mathcal{A}$, $g$)}{
  Allocate $g$ to $i$: $A_i\leftarrow A_i\cup \{g\}$ and update pool: $P\leftarrow P\setminus \{g\}$\;
  }
  \Fn{$U_1$($\mathcal{A}$, $g$)}{
     Find a cycle $C$ described in Lemma~\ref{lem:precycle}\;
     For a source agent $s$ on $C$, find the safe bundle $S$ for $(\mathcal{A},s,g)$ by Lemma~\ref{lem:1}\;
     Let $A_s\leftarrow S$ and $P\leftarrow P\cup A_s\cup\{g\}\setminus S$\;
     Run the cycle-elimination procedure for the agents on $C$\;
  }
\end{algorithm}

\subsection{Proof of Theorem \ref{thm:1}} \label{sec:proof}
We first prove the fact that a safe bundle always exists and can be calculated in polynomial time.
Notice that the updating rule $U_1$ makes sense if and only if Lemma \ref{lem:1} holds.

\noindent\textbf{Proof of Lemma \ref{lem:1}:}
Suppose $SE:=\{a_1,a_2,\ldots,a_k\}$ is the set of agents who strongly envy $i$ after adding $g$ to $A_i$.
In other words, they strongly envy $A_i\cup\{g\}$.
Firstly, we claim that there exists a bundle $S\subseteq A_i\cup\{g\}$ such that 
    \[v_{a_1}(S)>v_{a_1}(A_{a_1})\geq v_{a_1}(S\setminus \{g'\})\quad \text{for any }g'\in S.\]
In other words, $a_1$ envies $S$ but does not strongly envy $S$.

The set $S$ can be found by letting agent $a_1$ iteratively remove an item from $A_i\cup\{g\}$ until agent $a_1$ does not strongly envy the bundle.
If $S$ does not exist, it must be that 1) agent $a_1$ strongly envies some $S'\subseteq A_i\cup\{g\}$, and 2) removing any single item $g'$ from $S'$ will cause $a_1$ no longer envy this bundle, i.e.,
    \[v_{a_1}(S'\setminus \{g'\})\leq v_{a_1}(A_{a_1}). \]
However, this contradicts to the fact that $a_1$ strongly envies $S'$. Hence, such $S$ always exists.

After iteratively updating $S$, we remove the agents that do not strongly envy $S$ from $SE$.
Then, we maintain that $SE$ only contains the agents who strongly envy $S$.
For example, by the above procedure, $a_1$ is no longer in $SE$.
If $SE$ is empty now, then $S$ is the safe bundle and $a_1$ is the corresponding maximal envious agent. If $SE$ is not empty, then choose $a'\in SE$ and do the same procedures above. 
After this step, $S$ and $SE$ will be further reduced and $a'$ is no longer in $SE$.
Since the size of $SE$ is decreased at least by 1 in each round, there exists an agent $a_j\in SE$ who is the last one to remove items from the bundle. 
Suppose $S\subseteq A_i\cup\{g\}$ to be the final version of the bundle after item removals. 
Then $S$ is the safe bundle and $a_j\in SE$ is the corresponding maximal envious agent. The following algorithm illustrates this process. Here we use $S,a$ to store the latest version of the bundle and the last agent who remove items, respectively.

\begin{algorithm}[H]
    \caption{Computing safe bundle and corresponding the maximal envious agent}
    \KwIn{A partial allocation $\mathcal{A}$, $i$, $g$}
    \KwOut{a safe bundle $S$ and a maximal envious agent $a$}
    $S\leftarrow A_i\cup\{g\}$\;
    let $SE$ be the set of agents who strongly envies $S$\;
    \While{$SE$ is non-empty}{
        Find an index $j$, such that $a_j\in SE$\;
        While $a_j$ strongly envies $S$, let $a_j$ remove an item from $S$ such that $v_{a_j}(A_{a_j})<v_{a_j}(S)$\;
        If some items in $S$ have been removed in the process above, update $a\leftarrow a_j$\;
        Remove $a_j$ from $SE$\;
    }
    \Return{$S,a$}
\end{algorithm}

Notice that this algorithm will perform at most $n$ rounds, as at least one agent is removed from $SE$ in each iteration. Each round obviously costs a polynomial time. Then the safe bundle and the maximal envious agent can be found in polynomial time. \qed

\noindent\textbf{Proof of Theorem \ref{thm:1}}: 
Firstly, the social welfare is increased by at least $1$ after each application of $U_1(\mathcal{A},g)$ in \texttt{update}$(\mathcal{A},g)$:
for $U_1$, Definition~\ref{def:safe} and Lemma~\ref{lem:1} ensure that the agent receiving the safe bundle $S$ gets a strictly higher value than before. 
Secondly, $U_0$ can be applied for at most $m$ times between two applications of $U_1$, as there can be at most $m$ items in $P$ and $U_0$ decreases $|P|$ by exactly $1$.
Since the social welfare can be at most $mn$, the algorithm terminates in at most $mn$ applications of $U_1$. 
Therefore, the algorithm terminates with at most $m^2n$ applications of $U_0$ or $U_1$, which runs in a polynomial time.

Also, the EFX property is preserved after each round. $U_0$ clearly preserves the EFX property by its pre-condition. 
Lemma~\ref{lem:remainEFX} guarantees that the (partial) allocation remains EFX after applying $U_1$.

Hence, we conclude Theorem~\ref{thm:1} that an EFX allocation always exists and it can be found in polynomial time. \qed

\section{Conclusion}
In this paper, we studied the existence of an EFX allocation under a binary valuation profile. In particular, we proved that such an EFX allocation always exists and proposed a polynomial-time algorithm to compute it. Compared with the general valuation, ``pre-envy'' makes sense in binary valuation profiles and gives us extra properties to cope with EFX.

\bibliographystyle{plainnat}
\bibliography{reference}

\end{document}